
\documentclass[a4paper,11pt]{article}
\usepackage{natbib}
\usepackage{amssymb,type1cm}
\usepackage[pdftex,bookmarks=true,bookmarksopen=true]{hyperref}

\title{The second-price auction solves \\ King Solomon's dilemma
\thanks{Preprint, Japanese Economic Review (2011)
\href{http://dx.doi.org/10.1111/j.1468-5876.2011.00543.x}{doi:10.1111/j.1468-5876.2011.00543.x}}
\thanks{The title is slightly rhetorical.
I would like to thank Takuma Wakayama, Yosuke Yasuda, and two referees for valuable comments.}}

\author{H.  Reiju Mihara\thanks{\emph{URL:} \url{http://econpapers.repec.org/RAS/pmi193.htm}
 (H.~R. Mihara).}\\ 
 Kagawa University Library, Takamatsu, 760-8525, Japan}

\date{July 2011}


\newcommand{\R}{\mathbb{R}}

\newcommand{\qed}{\enspace\enspace \vrule height 6pt width5pt
depth2pt}

\usepackage{theorem}

\newtheorem{theorem}{Theorem}
\newtheorem{prop}[theorem]{Proposition}

{\theorembodyfont{\normalfont}
\newtheorem{remark}{Remark} 
}

\newenvironment{proof}{\emph{Proof}.}{\qed\bigskip}

\begin{document} 

\maketitle 


\begin{abstract} 
The planner wants to give $k$ identical, indivisible objects to the top $k$ valuation agents at zero costs.
Each agent knows her own valuation of the object and whether it is among the top~$k$.
Modify the $(k+1)$st-price sealed-bid auction by introducing a small participation fee 
and the option not to participate in it.
This simple mechanism implements the desired outcome in iteratively undominated strategies.
Moreover, no pair of agents can profitably deviate from the equilibrium by coordinating their strategies or
bribing each other.

\emph{Journal of Economic Literature} Classifications:  C72, D61, D71, D82.

\emph{Keywords:}  
Solomon's problem, mechanism design, implementation, 
iterative elimination of weakly dominated strategies,
entry fees, Olszewski's mechanism, collusion, bribes.
\end{abstract}

\pagebreak

\section{Introduction}

King Solomon's problem, when generalized to multiple units of an item, is described as follows: 
$k$~identical, indivisible objects are to be allocated among $n$ agents, where $k<n$.  
The objective of the ``planner'' (``auctioneer'') is to give the objects at no cost 
to the $k$ agents with the highest valuations.\footnote{There are many situations of this sort, 
including those mentioned in \citet[footnote~1]{glazer-m89}.}
I make the following informational assumptions:
First, the agents and the planner know that there is a gap greater than $\delta>0$
between the $k$th and the ($k+1$)st valuations (inequality (\ref{ineq2})).
Second, each agent knows not only her own valuation of the object but also whether she is among the top 
$k$ valuation agents.

For $k=1$, I propose the following variant of the second-price (sealed-bid) auction to solve the problem.
First, the agents need not participate in the auction if they do not want to.
Second, they may have to pay a small participation fee---a situation that arises
if the number of actual participants exceeds one (or the number $k$ of objects to be allocated, more generally).
In other words, I modify the second-price auction (Vickrey auction) by introducing 
the option not to participate in it and 
an arbitrarily small entry fee (any positive amount not greater than the gap~$\delta$).
This simple, intuitively appealing mechanism (modified auction) solves the problem.
Most likely, this mechanism, with entry fees, is close to what ordinary people would think of when they learn
the notion ``second-price auction,'' hence the title. 

The $(k+1)$st-price auction (instead of the second-price auction) for $k$~objects, similarly modified,
solves the generalized problem (Proposition~\ref{main}).  In other words, this two-stage mechanism
\emph{implements} the desired outcome in \emph{iteratively undominated strategies}
(obtained by one round of elimination of \emph{all} weakly dominated strategies,
 followed by two rounds of elimination of \emph{all} \emph{strictly} dominated ones).
In fact, only those top $k$ valuation agents choose to participate in an auction, which rules out
the need to hold an auction.

The reasoning behind this conclusion is straightforward.
In this auction, it is a weakly dominant strategy for each agent to bid her (true) valuation.
The planner can set an entry fee equal to $\delta>0$ so that the top $k$ valuation agents 
can profitably obtain the object by paying the $(k+1)$st price and the entry fee.
Then the other agents will not enter the auction, since they can only expect to pay the entry fee
without getting the object.
While the logic is simple, a careful argument specifying \emph{what information is available to which agent} is called for.
I state as explicitly as possible the informational assumptions on which each step of the argument is based.

\bigskip

Earlier contributions to King Solomon's problem, such as \citet{glazer-m89} and \citet{moore92}, 
consider the case of $k=1$ object and assume that each agent knows the \emph{other agents}' valuations, too.
Under this complete information assumption, they construct multi-stage mechanisms that implement the outcome
in subgame-perfect equilibrium.  Though those mechanisms consist of more than two stages, 
they have an appealing feature that only one agent moves at each stage.

More recently, assuming that each agent only knows her \emph{own} valuation 
as well as whether she is one of the top $k$ valuation agents,
several authors have constructed ingenious mechanisms that implement the outcome
in iteratively undominated strategies.
\citet{perry-r99} and \citet{olszewski03} construct mechanisms for $k=1$.
\citet{bag-s05} extend Olszewski's mechanism to any~$k$
 (they have also investigated the complete information setting).
\citet{qin-y09} propose an alternative mechanism for any~$k$.\footnote{%
Their solution concept is also one round of elimination of weakly dominated strategies,
followed by two rounds of elimination of strictly dominated ones.
I would like to thank Takuma Wakayama for pointing out an earlier version of their paper.
Most results in the present paper were obtained independently and made public
on the website of Social Science Research Network in August 2006. 
} %
Like mine, these mechanisms consist of two stages and lack the feature of only one agent moving at each stage.

As pointed out in most of these papers~\citep{glazer-m89,moore92,perry-r99}, an
auction itself does not solve the problem, since it involves a transfer of money.  
It is interesting to note that, given this fact, all the authors who deal with the incomplete information
settings propose a mechanism,
 of which a stage game is a modified version of the second-price auction.\footnote{%
Those dealing with the complete information settings \citep{glazer-m89,moore92} also propose auction-like mechanisms,
though the bidding protocols are different from the second-price auction.}
However, their modifications are fairly sophisticated, not appearing as straightforward as mine.
\citet{perry-r99} use a second-price \emph{all-pay} auction with the winner having an 
\emph{ex post option} to quit.  \citet{olszewski03} uses the second-price auction 
modified by adding an \emph{extra, non-constant (positive) payment from} the planner.
\citet{qin-y09} use a second-price auction with entry fees, 
where (for $n=2$, $k=1$, and $i\ne j$) 
$i$'s entry fee is determined as a function of $j$'s bid $b_j$ and of $i$'s \emph{guess} of $b_j$.
Their mechanism loses the advantage of the second-price auction 
that each agent need not guess others' bids or valuations.\footnote{ \label{qin-y}
The endogeneity of the fees is needed in their paper because they 
allow the case where the higher valuation and the lower valuation can be arbitrarily close:
in (\ref{ineq2}) of Section~\ref{framework}, they assume $\delta\ge 0$ instead of $\delta>0$.
See footnote~\ref{ols-delta}.
Qin and Yang assume that each agent is an expected utility maximizer,
who forms a subjective distribution of the other's valuation conditional on her own.
This assumption, which I do not make, is needed to obtain an optimal guess
in their paper.}

In Section~\ref{discussion}, I compare my mechanism with Olszewski's, 
which is one of the simplest in the literature.
Olszewski's mechanism requires the planner to subsidize the agents out of the equilibrium path.\footnote{%
In contrast, if my mechanism fails at the first stage, what comes after the entry fees are collected
 is just an ordinary second-price auction.
So the mechanism is particularly attractive as a compromise solution 
in situations where solution based on price is not too problematic but has not 
been used (because of some sort of stigma), 
such as assignment of parking spaces at some university campus.
}  %
As a result, it is vulnerable to
\emph{collusion between agents that bribe each other} to coordinate their strategies.
In fact, they can profitably deviate from the equilibrium without even manipulating their bids
 (Proposition~\ref{bribes1}).
Unlike Olszewski's (and unlike the second-price auction), my mechanism is not vulnerable
 to such collusion (Proposition~\ref{bribes2}).
 
\section{Framework}\label{framework}

We consider the problem~$\mathcal{P}_n^k$,
a multi-unit generalization of King Solomon's problem: 
$k$ identical, indivisible objects are to be allocated among $n$ agents, where $0<k<n$.  
The objective of the planner is to give the objects to the top $k$ 
valuation agents at zero monetary costs to the planner and the agents.

The framework is as follows:
Let $N=\{1, \ldots, n\}$ be the set of agents.
Fix a certain number $\delta>0$, which is known to everyone (i.e., all agents and the planner).
Fix a set $Q\subset \R^n$ of possible profiles of valuations of the object
such that every profile $(v_1, \ldots, v_n)$ in $Q$ contains at least $k$ positive components.
(The valuation by the planner is understood to be zero.)
At \emph{Stage}~$0$, God (Nature) announces a pair~$(v, H)$, where $v=(v_1, \ldots, v_n)\in Q$ 
is a profile of valuations and $H\subset N$ is a set consisting of $k$ agents
such that $i\in H$ implies $v_i\ge v_j$ for all $j\in L:=N\setminus H$.
\emph{While no one needs to know the set $Q$ itself, 
everyone (including the planner) knows the following condition imposed on~$Q$}: 
if $i\in H$ and $j\in L$, then $v_i> 0$ and\footnote{\label{profiles}
This assumption is naturally satisfied if, for example, $n=3$, $k=2$ and either 
(i)~$Q=U_1\times U_2 \times U_3$ for some pairwise disjoint finite sets $U_1$, $U_2$, and $U_3$ in $\R_+$
or
(ii)~for some disjoint closed intervals $U$, $V$ in $\R_+$ such that $u\in U$ and $u'\in V$ imply $u>u'$, 
we have $Q=(U\times U\times V)\cup(U\times V\times U)\cup (V\times U\times U)$.}
\begin{equation}\label{ineq2}
v_i-v_j>\delta.
\end{equation}
This condition implies that given $v\in Q$, we have $H=H_v$, where
$H_v$ is the (uniquely determined) set of $k$ agents with the highest valuations at~$v$.
Inequality~(\ref{ineq2}), indicating that the difference between the top $k$ valuations
and the others exceeds $\delta$, will serve as a ``word of wisdom'' that facilitates the construction of a
successful mechanism.

Each agent~$i$ observes her own type $\theta_i=(v_i, H(i))$, 
where $H(i)=1$ or~$0$, depending on $i\in H$ or not.\footnote{\label{bag:info}
The implementability result of \citet{bag-s05} is valid under this assumption, 
though they make a stronger assumption that each $i$~observes $(v_i, H)$.}  %
Let $\Theta_i$ 
be the set of possible types of~$i$, $\Theta_{-i} :=(\Theta_j)_{j\ne i}$, and $\theta_{-i} :=(\theta_j)_{j\ne i}$.
Given $\theta_i \in \Theta_i$ , agent $i$ has the set $\Theta_{-i}[\theta_i]$
of possible types of the other agents.\footnote{$\theta_{-i} \in \Theta_{-i}[\theta_i]$
iff $\theta_{-i} \in \Theta_{-i}$ and $(\theta_i,\theta_{-i})=((v_1,H_v(1)), \ldots, (v_n, H_v(n))$ for some $v\in Q$.
Since $i$ need not have an exact knowledge of~$Q$, she need not know the set $\Theta_{-i}[\theta_i]$
exactly, but our informational assumption about her knowledge of~$Q$ is sufficient to obtain the results.}
Each agent $i$'s payoff is $u_i((x_i,y_i),\theta_i)=v_i x_i + y_i$,
where $x_i\in \{0,1\}$ is the number of units of the object and $y_i\in \R$
the payment that $i$ receives.

The planner does not observe God's announcement.
It is common knowledge that the agents and the planner 
have the knowledge described here.

In the terminology of implementation theory, the problem~$\mathcal{P}_n^k$ is that of implementing the
(single-valued) \emph{choice function}~$f$ defined as follows: $f$ assigns an allocation
$f(v)=(x_i, y_i)_{i\in N}\in (\{0,1\}\times \R)^n$ to each profile $v\in Q$ of valuations,
which allocation is defined by $(x_i, y_i)=(1,0)$ if $i\in H_v$ and $(x_i, y_i)=(0,0)$ if $i\notin H_v$.

\section{The Solution}

The mechanism~$\mathcal{M}_n^k$ consists of two stages, Stage~1 followed by Stage~2. 
(We can regard it as a single-stage mechanism by considering its strategic form representation.
However, our solution concept---iteratively undominated strategies---seems 
more appealing if the mechanism is presented in an extensive form.)
I describe Stage~2 first.

\emph{Stage}~2 is the \emph{$(k+1)$st-price sealed-bid auction} for $k$ objects,
except that the planner collects the \emph{participation fee} $\delta>0$ from each participating agent.
There are at least $k+1$ agents participating in the auction and 
each participating agent~$i$ bids $b_i\in\R$.
Thus, \emph{any} bid $b_i\in \R$ (including those not corresponding to any profile in~$Q$) is allowed at this stage.
Rearrange the \emph{named bids}~$(b_i, i)$ according to the lexicographic order---first 
in terms of the value $b_i$ (highest bid first), second in terms of the agent name~$i$ (lowest number first).
Let $b^{k+1}$ be the $(k+1)$st bid (i.e., the first component of the $(k+1)$st named bid according to the above order)
by the participating agents.
The following is what agent~$i$ receives, as well as her payoff $u_i$: 
(a)~if $b_i$ is among the  $k$ highest bids (i.e., $(b_i, i)$ is among the first $k$ named bids according
to the lexicographic order), then $i$ gets the object but pays the $(k+1)$st bid and 
the participation fee, implying $u_i=v_i-b^{k+1}-\delta$;
(b)~otherwise, $i$ pays the participation fee, implying $u_i=-\delta$.

\emph{Stage}~1 is a simultaneous-move game
 in which the agents say either 
 ``auction'' (which means that she is willing to move on to Stage~2 and participate in an auction) 
 or ``no (auction).''
 More formally,  agent $i$ chooses a first-stage move $m_i\in \{1,0\}$, with
1 denoting ``auction'' and 0 ``no.''  
If at least $k+1$ agents say ``auction,'' then (only) those agents move on to Stage~2;
 the others get nothing.
 If less than $k+1$ agents say ``auction,''  then they get the object; the others get nothing.
  (If no agent says ``auction,'' then no agent gets anything.)

After playing Stage~1, each agent who said ``auction'' 
observes whether there were at least $k+1$ such agents.
This means that she knows whether an auction is to be held, though she does not know who are participating.
More formally, (given the realization of a $v\in Q$) each agent $i$ has just one 
\emph{information set} belonging to the second-stage,
 which set consists of all tuples $(m_1, \ldots, m_n)$ of first-stage moves
that contain 1's in the $i$th component and in at least $k$ others.
For example, if $n=3$ and $k=1$,
then agent 1's second-stage information set is $\{(1,1,1), (1,1,0), (1,0,1)\}$.
Without making any inferences,
she can only tell whether the result $(m_1, \ldots, m_n)$ of the first-stage belongs to her information set.
Under this assumption, agent $i$'s (global) strategy can be defined as a function
that maps each type $\theta_i=(v_i, H(i))\in \Theta_i$ to a message
$s_i=(m_i, b_i)\in S_i:= \{(\textrm{1 (``auction'')}, \textrm{0 (``no'')}\} \times \R$,
where $m_i$ is a first-stage move and $b_i$ is a bid that she will make if she participates in an auction.\footnote{%
One can make alternative assumptions about the information sets without affecting the result.
The argument will be similar, though the notation may become slightly more complex.
For example, one can assume that each agent observes the set of agents who said ``auction''
(this means that each agent observes the tuple of first-stage moves).
For $n=3$ and $k=1$, this implies that, say, 
agent 1 has the following three second-stage information sets:
$\{(1,1,1)\}$, $\{(1,1,0)\}$, and $\{(1,0,1)\}$.
Since her bid in Stage~2 can depend on which of these three has occurred,
her global strategy in this case is a function that maps each $\theta_1$
to $(m_1, b_1^{\{(1,1,1)\}}, b_1^{\{(1,1,0)\}}, b_1^{\{(1,0,1)\}})$,
where $b_1^I\in \R$ is her bid at an information set~$I$.} 
A message $s_i\in S_i$ is also referred to as a \emph{strategy} (available in the mechanism).
Let $g(s)$ be the outcome $(x_i,y_i)_{i\in N}$ of the mechanism~$\mathcal{M}_n^k$
when the messages are~$s$.

\bigskip

To make precise the statement that the mechanism above implements the desired outcome
 in iteratively undominated strategies, I introduce a few terms.

Let $S_i=\{1,0\}\times \R$ be the set of strategies (messages) for each~$i$.
Generalizing the solution concept in \citet{moulin79} and \citet[page~282]{perry-r99}
to the incomplete information setting, 
I say that $s=(s_1, \ldots, s_n)\in S_1\times \cdots \times S_n$ is a profile of \emph{iteratively (weakly) undominated}
strategies (I sometimes say that $s$ is an \emph{equilibrium}) at $v\in Q$ 
if $s\in S_1^T[\theta_1^v] \times \cdots \times S_n^T[\theta_n^v]$ for $\theta^v=(v_i,H_v(i))_{i\in N}$ and 
for some integer~$T$ (terminal round),
where the sequence 
\[
\langle (S_i^t[\theta_i], \{s_{-i}^t | \theta_i\}): \theta_i\in \Theta_i, i\in N, t\in \{0, \dots, T+1\} \rangle
\]
($S_i^t[\theta_i]$ is the set of $i$'s own strategies remaining after the $t$th round and
$\{s_{-i}^t | \theta_i\}$ is the set of the others' strategies that are [from the viewpoint of~$i$]
possibly remaining after the $t$th round)
is obtained by the following procedure:
for each~$i$ and $\theta_i$, $S_i^0[\theta_i] = S_i$, $\{s^{0}_{-i} | \theta_i\}=S_{-i}$, 
$S_i^T[\theta_i] = S_i^{T+1}[\theta_i]$, 
and for each \emph{round} $t\in \{1, \ldots, T+1\}$, 
$S^t_i [\theta_i]$ 
is the set of weakly undominated strategies in $S^{t-1}_i[\theta_i]$ at $\theta_i$ against
the strategies in $\{s^{t-1}_{-i} | \theta_i\}$,\footnote{%
That is, $s_i\in S^t_i [\theta_i]$ if $s_i\in S^{t-1}_i [\theta_i]$ and there is no $s'_i\in S^{t-1}_i [\theta_i]$ such that 
$u_i(g_i(s'_i,s_{-i}), \theta_i)\ge u_i(g_i(s_i,s_{-i}), \theta_i)$
for all $s_{-i} \in \{s^{t-1}_{-i} | \theta_i\}$, with strict inequality for some $s_{-i} \in \{s^{t-1}_{-i} | \theta_i\}$.}
and
$\{s^{t}_{-i} | \theta_i\} := \bigcup_{\tilde{\theta}_{-i}\in \Theta_{-i}[\theta_i]} \prod_{j\ne i} S^t_j[\tilde{\theta}_j]$.\footnote{
That is, $s_{-i}\in \{s^{t}_{-i} | \theta_i\}$ if for some $\tilde{\theta}_{-i} \in \Theta_{-i}[\theta_i]$,
we have $s_j\in S^t_j[\tilde{\theta}_j]$ for all $j \ne i$.
Obviously, $\{s^{t}_{-i} | \theta_i\} \subseteq \{s^{t-1}_{-i} | \theta_i\}$ for all $t$.}
In each round~$t$, 
$\langle (S_i^t[\theta_i], \{s_{-i}^t | \theta_i\}): \theta_i\in \Theta_i, i\in N \rangle$
is obtained from
$\langle (S_i^{t-1}[\theta_i], \{s_{-i}^{t-1} | \theta_i\}): \theta_i\in \Theta_i, i \in N \rangle$
that has been obtained.
Note that in each round~$t$, \emph{all} weakly dominated strategies in $S_i^{t-1}[\theta_i]$ against the strategies
in $\{s_{-i}^{t-1} | \theta_i\}$ are eliminated.

I say that the mechanism~$\mathcal{M}_n^k$ \emph{implements the choice function $f$ in iteratively
undominated strategies} if for each $v\in Q$,
the outcome $g(s)$ corresponding to \emph{any} remaining strategy profile~$s$
(i.e., $s$~is a profile of iteratively undominated strategies at $v$) is~$f(v)$.
There may be many remaining~$s$, but they must all yield the same outcome 
$f(v)=(x_i, y_i)_{i\in N}$ (allocation) defined in Section~\ref{framework}.

\begin{prop} \label{main}
The mechanism~$\mathcal{M}_n^k$ solves the problem~$\mathcal{P}_n^k$; that is, 
it implements the choice function~$f$ in iteratively undominated strategies.
Furthermore, the profiles of iteratively undominated strategies can be 
obtained by one round of elimination of \emph{all} weakly dominated strategies, 
followed by (at most) two rounds of elimination of \emph{all strictly} dominated ones.\end{prop}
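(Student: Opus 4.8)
The plan is to run the elimination procedure explicitly, tracking for each type $\theta_i=(v_i,H(i))$ the surviving set $S_i^t[\theta_i]$ and showing it stabilizes after three rounds; write a generic strategy as $s_i=(m_i,b_i)$ with $m_i\in\{1,0\}$ and $b_i\in\R$. \emph{Round~1} uses the familiar fact that truthful bidding is a weakly dominant action in the $(k+1)$st-price sealed-bid auction: conditional on reaching Stage~2 (which depends only on the first-stage moves) bidding $v_i$ does at least as well as any other bid against every rival bid profile and strictly better against some, while conditional on not reaching Stage~2 the bid is irrelevant. Hence every $(1,b_i)$ with $b_i\ne v_i$ is weakly dominated by $(1,v_i)$. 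By contrast $(1,v_i)$ is not weakly dominated (against ``all rivals enter and bid huge'' it can give a strictly negative payoff, so no non-participating strategy dominates it), and the strategies $(0,b_i)$ are mutually payoff-equivalent and all survive. So $S_i^1[\theta_i]=\{(1,v_i)\}\cup(\{0\}\times\R)$ for every $i,\theta_i$.

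\emph{Round~2} handles the top agents. Fix $i$ with $H(i)=1$; I would show $(1,v_i)$ strictly dominates every $(0,b_i)$ against $\{s^1_{-i}|\theta_i\}$. Since $(0,b_i)$ always yields $i$ a payoff of $0$, it suffices to check that $(1,v_i)$ yields $i$ a strictly positive payoff against every surviving rival profile. Fix one, consistent with some $\tilde\theta_{-i}\in\Theta_{-i}[\theta_i]$; then $i\in\tilde H$ with value $v_i>0$, so $\tilde H\setminus\{i\}$ has $k-1$ members and, by~(\ref{ineq2}), every agent in $\tilde L$ has valuation below $v_i-\delta$. If at most $k$ agents say ``auction,'' then $i$ receives an object for free, payoff $v_i>0$. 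If at least $k+1$ do, then in Stage~2 every participating $\tilde L$-agent bids below $v_i-\delta$, hence below $v_i$ and below every $\tilde H$-bid; so in the lexicographic ranking all $\tilde H$-participants (at most $k$, $i$ among them) precede all $\tilde L$-participants, whence $i$ is one of the $k$ winners and the $(k+1)$st bid $b^{k+1}$ belongs to a $\tilde L$-agent, i.e.\ $b^{k+1}<v_i-\delta$, giving $i$ a payoff $v_i-b^{k+1}-\delta>0$. Either way $(1,v_i)$ strictly beats $(0,b_i)$, and being always strictly profitable it is not itself removed, so $S_i^2[\theta_i]=\{(1,v_i)\}$ whenever $H(i)=1$. (Any strategy that may get removed from an $L$-agent's set in this round is immaterial for what follows.)

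\emph{Round~3} handles the bottom agents, and then the conclusion follows. Fix $i$ with $H(i)=0$. After Round~2 every surviving rival profile has all $k$ agents of $\tilde H$ (all distinct from $i$) playing ``auction'' with truthful bids $\tilde v_j>v_i+\delta$; so if $i$ plays $(1,v_i)$ then at least $k+1$ agents say ``auction,'' Stage~2 runs, at least $k$ bids strictly exceed $v_i$, $i$ fails to be among the $k$ winners and pays the entry fee for nothing, payoff $-\delta<0$. Since $(0,v_i)$ yields $0$ it strictly dominates $(1,v_i)$, so $S_i^3[\theta_i]=\{0\}\times\R$ whenever $H(i)=0$; nothing further can be removed, so $T=3$ works --- one weak round then (at most) two strict ones. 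Finally, at any $v\in Q$ a surviving profile has each $i\in H_v$ playing $(1,v_i)$ and each $i\notin H_v$ playing $(0,\cdot)$; exactly $|H_v|=k<k+1$ agents say ``auction,'' so no auction is held, the agents of $H_v$ each receive one object at zero cost and the others receive nothing --- that is, $g(s)=f(v)$.

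The main obstacle is Round~2: one must argue, using only what $i$ knows --- that $v_i$ is her valuation and that she is a top-$k$ agent, hence that at most $k-1$ \emph{other} top agents exist and every bottom agent lies more than $\delta$ below $v_i$ --- that whenever an auction actually runs she is guaranteed \emph{both} to win \emph{and} to pay a $(k+1)$st price strictly below $v_i-\delta$, and this uniformly over all type profiles in $\Theta_{-i}[\theta_i]$ and all still-surviving rival strategies, including those in which several other top agents opt out of the auction. Getting the lexicographic tie-breaking and the count of who can possibly outrank $i$ exactly right is where the care is needed; the rest is bookkeeping.
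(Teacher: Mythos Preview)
Your approach matches the paper's: three rounds of elimination---truthful bidding in Round~1, the $H$-agents' choice to enter in Round~2, the $L$-agents' choice to stay out in Round~3---followed by reading off the outcome. Rounds~2 and~3 and the conclusion are handled correctly and track the paper's proof closely.

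There is one genuine slip in Round~1. You write that $(1,v_i)$ is not weakly dominated because ``against `all rivals enter and bid huge' it can give a strictly negative payoff, so no non-participating strategy dominates it.'' This is backwards: the scenario you cite shows that $(0,\cdot)$ (payoff~$0$) is \emph{strictly better} than $(1,v_i)$ there, which is exactly what a putative dominator of $(1,v_i)$ would need. What your example actually establishes is the other survival claim you state without proof---that $(1,v_i)$ does not weakly dominate $(0,b_i)$, so the $(0,b_i)$'s survive. To show $(1,v_i)$ survives you need a rival profile against which $(1,v_i)$ beats $(0,\cdot)$; the paper uses ``all rivals enter and bid $v_i-2\delta$,'' which yields $u_i=\delta>0$ for~$i$ (and works even when $v_i\le 0$, so it covers $L$-agents too). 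Swap in the correct example and your Round~1 is complete.
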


\begin{proof}
Choose a $v\in Q$.  
We apply the elimination procedure three times and show that any remaining
strategy profile yields the outcome~$f(v)$.
Let $H=H_v$, $L=N\setminus H$ and $b^*_i=b^*_i(v_i,H(i))=v_i$ for all $i\in N$.

\emph{In the first round, each $i\in N$ eliminates all the strategies $(1,b_i)=(\textup{``auction''}, b_i)$ 
such that $b_i \neq v_i$.}  Indeed, it  is weakly dominated by $(1, b^*_i)$.
To see this, fix any $(m_j,b_j)_{j\ne i}$.
Then, depending on $m=(m_1, \ldots, m_k)$, we have two cases.
If $m$ contains at most $k$ 1's (i.e., if there are at most $k$ agents saying ``auction''),
then an auction is not held.  In this case, $i$ is indifferent between the two strategies $(1,b_i)$ and $(1,b^*_i)$
(in either case, $i$ gets the object).
If $m$ contains more than $k$ 1's, then an auction is held among the agents saying ``auction.''
Since the fees $\delta$ are independent of the agents' moves, the well-known result for the $(k+1)$st-price auction
(for fixed bidders)
implies that $b^*_i=v_i$ is the unique weakly dominant strategy for each $i$.%
\footnote{
The conclusion, say for~$i$, 
can be derived by fixing the bids of the other participating agents and then 
comparing $i$'s payoffs for bidding her valuation ($b^*_i=v_i$) and
for bidding something else, for each case: (a)~bidding $b^*_i$ is among the $k$ highest bids,
and (b)~otherwise.}  %
This implies that $(1,b^*_i)$ is at least as good as $(1,b_i)$ for $i$ and sometimes better.

The only remaining strategies in this round are $(1,b^*_i)$ and $(0,b_i)$, where $b_i\in\R$ is arbitrary.
We show that these cannot be eliminated in this round.
For $(1,b^*_i)$ to be eliminated, it has to be weakly dominated by $(0,b_i)$, which gives a zero payoff to~$i$.
But this is impossible (even for $i\in L$ such that $v_i<-\delta$)
since, for some $(m_j,b_j)_{j\ne i}$,
$(1,b^*_i)$ gives a greater payoff than $(0,b_i)$ to~$i$.  
(For example, let $m_j=1$ and $b_j=v_i-2\delta$ for all $j\ne i$.
Then $(1,b^*_i)$ gives $u_i=\delta>0$.)
For $(0,b_i)$ to be eliminated, it has to be weakly dominated by $(1,b^*_i)$.
But this is impossible since, for some $(m_j,b_j)_{j\ne i}$,
$(0,b_i)$ gives a greater payoff than $(1,b^*_i)$ to~$i$.
(For example, let $m_j=1$ and $b_j=v_i+\delta$ for all $j\ne i$.
Then $(1,b^*_i)$ gives $u_i= -\delta$.)

\emph{In the second round, each $i\in H$ eliminates all the strategies $(0,b_i)=(\textup{``no''}, b_i)$.}
In fact, it is \emph{strictly} dominated by $(1,b^*_i)$ against the others' possibly remaining strategies.
To see this, fix any $(m_j,b_j)_{j\ne i}$ such that $m_j=1$ implies $b_j=\tilde{v}_j$, 
where $\tilde{v}_j$ is a possible value of $v_j$ from $i$'s viewpoint.
Then, depending on $m_{-i}=(m_j)_{j\ne i}$, we have two cases.
If $m_{-i}$ contains less than $k$ 1's (i.e., if less than $k$ other agents say ``auction''),
then an auction is not held in any case.
So $(1,b^*_i)$ is better than $(0,b_i)$ for $i$ since the former gives 
her the object with a payoff of $v_i> 0$ (since $i\in H$), while the latter gives her a zero payoff.
If $m_{-i}$ contains at least $k$ 1's, then we can show as follows that $(1,b^*_i)$ is better than $(0,b_i)$ for $i$.
By choosing $(1,b^*_i)$, $i$ can participate
in an auction, which is held.  In this case, $i$ knows that she will be among the $k$ highest bidders.
Her payoff from the auction is $u_i=v_i-b^{k+1}-\delta$, which depends on $b^{k+1}$ yet to be known.
But she can deduce that if $\tilde{v}_j=b_j=b^{k+1}$, 
then $\tilde{H}(j)=0$ (i.e., $j\in \tilde{L}$)
(if $\tilde{H}(j)=1$, then $\tilde{v}_j$ is among the top $k$ bids).
Then, by (\ref{ineq2}), $i\in H$ and $j\in \tilde{L}$ implies that
$u_i=v_i-\tilde{v}_j-\delta>0$.
Note that at the end of the second round, $i\in H$ has only one remaining strategy.

In the second round, $i\in L$ cannot eliminate any $(0,b_i)$.
For $(0,b_i)$ to be eliminated, it has to be weakly dominated by $(1,b^*_i)$.
But this is impossible since, for some $(m_j,b_j)_{j\ne i}$ such that $m_j=1$ implies $b_j=\tilde{v}_j$,
$(0,b_i)$ gives a greater payoff than $(1,b^*_i)$ to~$i$.
(For example, let $m_j=1$ and $b_j=v_j$ for all $j\ne i$.)

On the other hand, $i\in L$ may eliminate $(1,b^*_i)$ in this round, depending on her knowledge of~$Q$.
For example, if $i\in L$ knows that $v_i\le 0$ and $v_i-\tilde{v}_j\le \delta$ for all 
$\tilde{v}_j$ for $j\ne i$ such that $(v_i,\tilde{v}_{-i}) \in Q$,
then the strategy $(1,b^*_i)$ gives her a payoff of either 
$v_i\le 0$ or $v_i-b^{k+1}-\delta \le 0$ or $-\delta<0$.\footnote{%
This possibility can be ignored if we assume that $v_i>0$ for all $i\in N$ or
that $Q\subset \R^n$ is not bounded below in any dimension.
Also, if $(1,b^*_i)$ is only weakly (not strictly) dominated, then one should eliminate it
if he follows the procedure strictly.  Instead, one can go on to the next round
and eliminate it, which is now strictly dominated, thus obtaining the same result.}
 
 \emph{In the third round, $i\in L$ eliminates the strategy $(1,b^*_i)=(\textup{``auction''}, b_i)$},
if she has not done so in the second round. 
In fact, it is \emph{strictly} dominated by $(0,b_i)$ against the others' possibly remaining strategies.
To see this, fix any $(m_j,b_j)_{j\ne i}$ such that if $m_j=1$, then $b_j=\tilde{v}_j$ and if 
$\tilde{H}(j)=1$, then $m_j=1$.
Since $i$ knows that the agents $j\in \tilde{H}$ will choose ``auction'' and  bid $b_j=\tilde{v}_j$,
she knows that if she says ``auction,'' an auction is held and she gets the payoff of 
$u_i=-\delta<0$ (she cannot get the object because she will not be among the $k$ highest bidders).

At this point in the elimination process, the remaining strategies $(m_i, b_i)$ are such that 
$(m_i, b_i)=(1, v_i)$ if $i\in H$ and $m_i=0$ if $i\in L$.
It is easy to see that any profile of such strategies yields the same outcome
(hence there are no more strategies to be eliminated in the fourth round).
Indeed, since only those agents in~$H$ say ``auction'' and there are exactly $k$ such agents, an auction will not be held.
So each agent in $H$ gets the object and each agent in $L$ gets nothing.\end{proof}

\begin{remark}
It appears that the conclusion of Proposition~\ref{main} is no longer true if we consider
(instead of the iteratively undominated strategies, where \emph{all} weakly dominated strategies
are eliminated in each round)
the strategies that remain under
 \emph{different procedures for eliminating weakly dominated strategies}.
For example, consider the classical two-agent case ($n=2$, $k=1$, $v_1>0$, $v_2>0$).
Suppose all weakly dominated strategies are eliminated only for the higher valuation agent $i\in H$ in the first round.
Thus, the remaining strategies for $i\in H$ are $(1,b^*_i) = (1, v_i)$ and $(0,b_i)$, where $b_i\in\R$ is arbitrary.
Then in the second round, the lower valuation agent $j\in L$ cannot eliminate
$(1, b_j) = (1, v_j+\delta)$.
To see this, note that if $(1, b_j)$ is weakly dominated by some strategy, it has to be
weakly dominated by $(1, b^*_j)=(1, v_j)$.
But it is easy to see that (since $\tilde{v}_i-v_j>\delta$ 
implies $b_j = v_j+\delta < \tilde{v}_i = b^*_i$)
$(1, b_j)$ and $(1, b^*_j)$ give the same payoffs for any strategy for~$i$.
Since $(1,b_j)$ was not eliminated in the second round, we cannot conclude that 
$i\in H$ eliminates all the strategies $(0,b_i)$ in the third round.
This is because she might obtain a negative payoff by participating in the auction.
The argument of the proof fails for this procedure.\end{remark}

\section{Discussion}\label{discussion}

It would be of some interest to compare the mechanism~$\mathcal{M}_n^k$ with
those in the literature~\citep{perry-r99, olszewski03, bag-s05,qin-y09}
dealing with the incomplete information environments.
Of those mechanisms, I focus on Olszewski's since it is simpler than
Perry and Reny's.  
Also,  Bag and Sabourian's mechanism for the incomplete information setting 
is an extension of Olszewski's, not an alternative to it.
Qin and Yang's mechanism performs just like mine, if we ignore the complexity of making guesses (see footnote~\ref{qin-y}).

I focus on the classical case of Solomon's problem in this section: $n=2$, $k=1$, $v_1>0$, $v_2>0$,
and for $i\in H$ (the higher-valuation agent) and $j\in L$ (the lower-valuation agent),
$v_i-v_j> \delta>0$.\footnote{\label{ols-delta}%
Olszewski constructs another mechanism that solves the problem for
 $\delta=0$ (the case where the higher valuation and the lower valuation can be arbitrarily close).
My mechanism fails to solve such a problem:  if $\delta=0$, the
lower valuation agent's strategy ``no'' is weakly dominated in the second round of elimination 
if she has a positive valuation.
Hence the usual second-price auction (no participation fees) will be held.}  %
Note that the planner can use arbitrarily small $\delta$ (because 
if $\delta>0$ satisfies the inequality, then so does any positive $\delta'\le \delta$).

Olszewski's mechanism works as follows: 
In Stage~1, the two agents say ``hers'' (corresponding to ``auction'' in this paper)
or ``mine'' (``no (auction)'') simultaneously.  If both say ``hers,'' then they \emph{move on to Stage~2}.
If only one says ``hers,'' then the agent who says ``mine'' gets the object. 
If both say ``mine,'' then both get \emph{nothing}.
Stage~2 is a modified second-price auction (modified such that
each agent pays the entrance fee $\delta$ but receives the other's bid):
if $b_i>b_j$, then $u_i=v_i-\delta$ and $u_j=b_i-\delta$.

Table~\ref{comparison} compares the payoffs for the two mechanisms,
 assuming $b_i>b_j$ and $i$ is the row player.

\begin{table}[htdp]
\begin{center}
\begin{tabular}{ccccccc}
 & ``hers''  & ``mine'' & \hspace{3mm}  & & ``auction''  & ``no'' \\ \cline{2-3} \cline{6-7}
``hers'' & $v_i-\delta, b_i-\delta$ & $0, v_j$ & & ``auction'' &  $v_i-b_j-\delta, -\delta$ &  $v_i, 0$  \\
``mine'' & $v_i, 0$ & $0, 0$ & & ``no'' & $0, v_j$ & $0, 0$ 
\end{tabular}
\end{center}
\caption{Payoffs for Olszewski's mechanism (left) and mine (right).}
\label{comparison}
\end{table}%

It is a weakly dominant strategy for each~$i$ to play $b_i=v_i$ in Stage~2.
The other strategies are eliminated in the first round of elimination of weakly dominated strategies.
Olszewski's mechanism requires another round:
``hers'' is a weakly (but not strictly) dominated strategy for the higher-valuation agent
and ``mine'' is one for the lower-valuation agent (if $i\in H$ and $j\in L$, then
$v_j<u_i=u_j=v_i-\delta<v_i$).
My mechanism requires two more rounds.  
But those strategies to be eliminated in the second and the third rounds are
 \emph{strictly} dominated.

Olszewski's mechanism relies on the availability of transfer from the planner 
out of the equilibrium path.\footnote{\label{subsidy}
The total amount received by the agents in Stage~2 is 
$-\delta+(b_i-\delta)=b_i-2\delta$.
If we require this value to be non-positive,  even if we assume $b_i=v_i$, we have
$v_i\le 2\delta < 2v_i-2v_j$, implying the inequality~$v_i>2v_j$, not likely in many situations.}
The reliance on subsidies from outside means that the agents are less
likely to find an outsider (planner) who is willing to adopt this mechanism.
In contrast, the total amount received by the agents in Stage~2 of my mechanism
 is negative ($(-b_j-\delta)-\delta=-b_j-2\delta=-v_j-2\delta<0$).
 
 \bigskip

I next consider the possibility of monetary transfers (not described by the mechanisms) between the agents.
I assume that the agents can bribe each other to coordinate their strategies.
Let $u_i(s)$ be $i$'s payoff from a mechanism, where $s=(s_i,s_j,s_{-ij})$ and $s_{-ij}=(s_k)_{k\notin\{i,j\}}$.
We say that a strategy profile $s=(s_i)$ is \emph{stable against pairwise (coalitional) deviations with transferable utility}
 if the following condition is violated:
there are two agents $i$, $j$, their strategies $s'_i$, $s'_j$, and a bribe $t\in\R$ such that 
$u'_i:=u_i(s'_i,s'_j,s_{-ij})+t >u_i(s)$
and
$u'_j:=u_j(s'_i,s'_j,s_{-ij})-t >u_j(s)$.

\begin{prop} \label{bribes1}
Suppose that the total amount received by the agents
in Stage~2 of Olszewski's mechanism is positive, assuming $i\in H$ bids $b_i=v_i$.
Then its equilibrium is not stable against pairwise deviations with transferable utility
even if the agents bid their valuations in Stage~2.
\end{prop}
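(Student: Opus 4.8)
The plan is to display an explicit profitable coalitional deviation by the (only) pair of agents. Recall that in the equilibrium~$s$ of Olszewski's mechanism the higher-valuation agent $i\in H$ says ``mine'' and the lower-valuation agent $j\in L$ says ``hers,'' so that $i$ receives the object at no charge: $u_i(s)=v_i$ and $u_j(s)=0$. Keeping the second-stage bids fixed at $b_i=v_i$ and $b_j=v_j$ (so that no bid is manipulated), I would consider the deviation $(s'_i,s'_j)$ in which \emph{both} agents say ``hers.'' Then the agents proceed to Stage~2, and since $b_i=v_i>v_j=b_j$, the resulting payoffs are $u_i(s'_i,s'_j,s_{-ij})=v_i-\delta$ and $u_j(s'_i,s'_j,s_{-ij})=b_i-\delta=v_i-\delta$ (here $s_{-ij}$ is vacuous since $n=2$).

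Next I would check that there is surplus to be shared. The joint payoff rises from $v_i+0$ to $(v_i-\delta)+(v_i-\delta)$, a net gain of $v_i-2\delta$, which is exactly the total amount received by the agents in Stage~2 (equal to $b_i-2\delta$ with $b_i=v_i$) and hence positive by hypothesis. Consequently one can choose a bribe $t$ from $j$ to $i$ with $\delta<t<v_i-\delta$; this interval is nonempty precisely because $v_i-2\delta>0$. Then $u'_i=u_i(s'_i,s'_j,s_{-ij})+t=(v_i-\delta)+t>v_i=u_i(s)$ and $u'_j=u_j(s'_i,s'_j,s_{-ij})-t=(v_i-\delta)-t>0=u_j(s)$, so both agents strictly gain, which is exactly the condition defining failure of stability against pairwise deviations with transferable utility. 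Since only first-stage announcements were altered, the deviation does not require manipulating bids, giving the stronger ``even if'' clause.

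I do not expect a genuine obstacle: the whole argument hinges on spotting that, when both announce ``hers,'' the loser $j$ is paid $b_i=v_i$ (net of the fee $\delta$) out of the planner's pocket, and that by the positivity hypothesis this subsidy strictly exceeds the $\delta$ that $i$ sacrifices by relinquishing the free object. The only point requiring a little care is to state the bribe range with both inequalities strict, which is guaranteed by $b_i-2\delta>0$; the same reasoning in fact shows the equilibrium is unstable whenever the aggregate Stage-2 transfer to the agents is strictly positive, which is the content of the hypothesis.
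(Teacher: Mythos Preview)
Your argument is correct and follows essentially the same route as the paper's own proof: both have the pair deviate to (``hers'',\,``hers'') while bidding truthfully, observe that the joint gain equals the Stage~2 subsidy $v_i-2\delta>0$, and pick a bribe in the interval $(\delta,\,v_i-\delta)$ (the paper writes it as $\delta+\epsilon$ with $0<\epsilon<v_i-2\delta$, which is the same set). Your version is slightly more explicit in first recording the equilibrium payoffs $u_i(s)=v_i$, $u_j(s)=0$, but otherwise the proofs coincide.
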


\begin{proof}
Consider the strategies such that both agents say ``hers'' and bid their valuations.
The total amount subsidized is $b_i-2\delta=v_i-2\delta>0$ by assumption. 
Find an $\epsilon>0$ such that $b_i-2\delta-\epsilon >0$.
Consider a bribe $\delta+\epsilon$ from $j\in L$ to $i\in H$.
Since $b_i>b_j$, the resulting payoffs are:
$u'_i=v_i-\delta+\delta+\epsilon=v_i+\epsilon>v_i$; 
$u'_j=b_i-\delta-\delta-\epsilon =b_i-2\delta-\epsilon>0$.\end{proof}

Note that if the agents are not restricted to bidding their (true) valuations, they can achieve arbitrarily
large payoffs,\footnote{
For any $\bar{u}_i>0$ and $\bar{u}_j>0$, fix a small $b_j$, 
find a bribe $t\in\R$ such that $u'_i=v_i-\delta+t >\bar{u}_i$,
and find a $b_i$ such that $u'_j=b_i-\delta-t>\bar{u}_j$.}
though the availability of subsidies then becomes questionable.

In contrast, my mechanism works better against bribes.  I present the result 
for a more general case of $n$~agents and $k$~objects; it includes the classical case.

\begin{prop}\label{bribes2}
Suppose that each individual has a positive valuation and 
is prohibited from submitting a negative bid: $v_i>0$ and $b_i\ge 0$ for each $i$.
Then the equilibrium of the mechanism~$\mathcal{M}_n^k$ is stable against pairwise deviations with transferable utility.\end{prop}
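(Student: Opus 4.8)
The plan is to reduce the statement to a ``no aggregate gain'' property and then to a short case analysis of the auction that a joint deviation may trigger. First I would record the equilibrium: by Proposition~\ref{main}, in any remaining strategy profile~$s$ at~$v$ the $k$ agents of $H$ say ``auction'' and bid $b_i=v_i$, while the agents of $L$ say ``no''; hence exactly $k$ agents say ``auction,'' no auction is held, and $u_i(s)=v_i$ for $i\in H$ while $u_i(s)=0$ for $i\in L$. Next I would note that a profitable pairwise deviation would require $u_i(s'_i,s'_j,s_{-ij})+t>u_i(s)$ and $u_j(s'_i,s'_j,s_{-ij})-t>u_j(s)$; adding these cancels the transfer~$t$ and gives $u_i(s'_i,s'_j,s_{-ij})+u_j(s'_i,s'_j,s_{-ij})>u_i(s)+u_j(s)$. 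So it suffices to show that for every pair $\{i,j\}$ and every joint deviation $(s'_i,s'_j)$ against the equilibrium play $s_{-ij}$ of the others, the aggregate payoff $W:=u_i(s'_i,s'_j,s_{-ij})+u_j(s'_i,s'_j,s_{-ij})$ does not exceed $W^{*}:=u_i(s)+u_j(s)=\sum_{\ell\in\{i,j\}\cap H}v_\ell$.

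For the analysis I would fix $\{i,j\}$, set $a:=|\{i,j\}\cap H|$, and observe that among the other agents exactly $k-a$ say ``auction'' (those in $H\setminus\{i,j\}$), so after the deviation the number of ``auction''-sayers is $p=(k-a)+m_i+m_j$ with $m_i,m_j\in\{0,1\}$ the deviators' first-stage moves. If $p\le k$, no auction is held, the ``auction''-sayers among $\{i,j\}$ each receive an object at zero cost, and $W=\sum_{\ell\in\{i,j\}:\,m_\ell=1}v_\ell$; a one-line check then gives $W\le W^{*}$ (if $a=2$, $W\le v_i+v_j=W^{*}$; if $a=0$, then $p\le k$ forces $m_i=m_j=0$ and $W=0=W^{*}$; if $a=1$ with $i\in H$, then $p\le k$ forces $m_i+m_j\le1$, so $W$ is $0$, or $v_i=W^{*}$, or $v_j<v_i=W^{*}$, the last by~(\ref{ineq2})).

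The substantive case is $p\ge k+1$, when an auction is held; this forces $m_i+m_j\ge a+1\ge1$ and, in particular, never occurs when $a=2$. Here the key observation is that the agents of $H\setminus\{i,j\}$ bid their true (high) valuations, so whenever an agent of $L$ among $\{i,j\}$ lands in the top $k$ bidders she must displace some $\ell_0\in H$; hence $b^{k+1}\ge v_{\ell_0}$, and by~(\ref{ineq2}) this exceeds $v_i+\delta$ (or $v_j+\delta$), so such a winner gets $u_i=v_i-b^{k+1}-\delta<-2\delta<0$. Every loser gets $-\delta<0$, and since bids are non-negative by hypothesis we have $b^{k+1}\ge0$, so the (at most one) member of $\{i,j\}$ that lies in $H$ never earns more than $v_i-\delta$ in the auction. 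Running through the handful of configurations of who says ``auction'' and who wins or loses, each summand of $W$ is then controlled: when $a=0$ one finds $W<0=W^{*}$, and when $a=1$ one finds $W<v_i=W^{*}$, the hypothesis $v_\ell>0$ for all $\ell$ supplying the small slack needed in the worst configuration (if $i\in H$ and $j\in L$ both bid into the auction and both win, displacing $\ell_0\in H$, then $W=v_i+v_j-2v_{\ell_0}-2\delta<v_i-v_j-4\delta<W^{*}$ because $v_j>0$). Combining the two cases yields $W\le W^{*}$ for every pair and every joint deviation, which by the reduction above establishes stability.

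I expect the main obstacle to be the bookkeeping in the last step: for each pattern of first-stage moves by $i$ and $j$ one must identify the participant set and determine which of $i,j$ win or lose, then check that~(\ref{ineq2}) together with $b_\ell\ge0$ drives the pair's aggregate payoff to $W^{*}$ or below. The unifying remark that an agent of $L$ can break into the top $k$ only by displacing an agent of $H$ --- which pushes $b^{k+1}$ above $v_i+\delta$ --- is what keeps this enumeration short.
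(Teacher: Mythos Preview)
Your approach is the same as the paper's: cancel the transfer~$t$ to reduce the question to showing no aggregate gain, then run a case analysis over who in $\{i,j\}$ is in~$H$ and what first-stage moves they make. The paper organizes the cases slightly differently (splitting directly by $\{i,j\}\cap H$ and then by the four move patterns in the mixed case), but the content is identical and relies on the same two ingredients---inequality~(\ref{ineq2}) and the nonnegativity of bids.

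One imprecision to fix: your ``key observation'' that an $L$-winner forces $b^{k+1}\ge v_{\ell_0}$ for some displaced $\ell_0\in H$ is not valid in the sub-case $a=1$, $i\in H$, $j\in L$, both say ``auction,'' $j$~wins and $i$~loses. There the lone loser is $i$ herself, so $b^{k+1}=b_i$, which need not be near~$v_i$; hence your summand bound $u_j<-2\delta$ can fail (e.g.\ $b_i=0$ gives $u_j=v_j-\delta$). The conclusion is still fine because $W=v_j-b_i-2\delta\le v_j-2\delta<v_i=W^{*}$ using $b_i\ge0$ and~(\ref{ineq2}); this is exactly how the paper handles its case~(iv). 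Relatedly, the ``worst configuration'' in the $a=1$ auction case is actually $i$~wins, $j$~loses (giving $W\le v_i-2\delta$), not the both-win case you single out---though, to your credit, the paper glosses over the both-win sub-case for $k\ge2$ that you treat explicitly.
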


\begin{proof}
Let $s$ be an equilibrium and suppose it not stable.
Then there are agents $i$, $j$, strategies $s'_i$, $s'_j$, and a bribe $t$ such that 
$u'_i:=u_i(s'_i,s'_j,s_{-ij})+t >u_i(s)$ and $u'_j:=u_j(s'_i,s'_j,s_{-ij})-t >u_j(s)$.
We have
\begin{equation} \label{better-off}
u'_i+u'_j = u_i(s'_i,s'_j,s_{-ij})+u_j(s'_i,s'_j,s_{-ij}) > u_i(s)+u_j(s).
\end{equation}

Suppose $i$, $j\in H$.  Then $u_i(s)+u_j(s)=v_i+v_j$.
Inequality  (\ref{better-off}) cannot be satisfied since $u_i(s')\leq v_i$ and $u_j(s')\leq v_j$ for any $s'$.

Suppose $i$, $j\in L$.  If both say ``no,'' they cannot meet inequality (\ref{better-off}).
So, suppose that $i$ says ``auction,'' in which case she is worse off (regardless of whether she gets the object),
unless she receives a sufficiently large bribe $t>0$.
Then $j$, who pays the bribe, is worse off (whether she participates in the auction), 
violating $u'_j>u_j(s)$.

It follows that $i\in H$ and $j\in L$ without loss of generality.

(i)~Suppose $i$ says ``auction'' and $j$ says ``no.''
Then $u'_i=v_i+t>u_i(s)=v_i$ implies $u'_j=0-t<0=u_j(s)$, a contradiction.

(ii)~Suppose $i$ says ``no'' and $j$ says ``auction.''
Then $u'_i=0+t>u_i(s)=v_i$ implies $u'_j=v_j-t<v_j-v_i< -\delta< 0=u_j(s)$, a contradiction.

(iii)~Suppose $i$ says ``no'' and $j$ says ``no.''
Then $u'_i+u'_j=0$ and $u_i(s)+u_j(s)=v_i$, violating (\ref{better-off}).

(iv)~Suppose $i$ says ``auction'' and $j$ says ``auction.''
If $j$ gets the object, (\ref{better-off}) implies that $u'_i+u'_j=v_j-b^{k+1}-2\delta> v_i$, where $b^{k+1}$ 
is the $(k+1)$st highest bid.  Then $-b^{k+1}-2\delta> v_i-v_j>\delta$, implying $-b^{k+1}> 3\delta>0$,
contradicting the assumption that bids are nonnegative.  The case where $i$ gets the object is easier.\end{proof}





\ifx\undefined\bysame
\newcommand{\bysame}{\hskip.3em \leavevmode\rule[.5ex]{3em}{.3pt}\hskip0.5em}
\fi

\end{document}